\newtheorem{thm1}{\bf Theorem}
\newtheorem{prop1}{\bf Proposition}
\newtheorem{lem1}{\bf Lemma}
\newtheorem{assmpt1}{\bf Assumption}
\newtheorem{defn17}{\bf Definition}
\newtheorem{rem1}{\bf Remark}
\newtheorem{cor1}{\bf Corollary}
\newenvironment{definition}{\begin{defn17}}{\hfill$\Diamond$\end{defn17}}
\newenvironment{remark}{\begin{rem1}}{\hfill$\Diamond$\end{rem1}}
\newenvironment{theorem}{\begin{thm1}}{\hfill$\Diamond$\end{thm1}}
\newenvironment{corollary}{\begin{cor1}}{\hfill$\Diamond$\end{cor1}}
\title{\LARGE \bf Uncertainty Quantification of Autoencoder-based Koopman Operator}
\author{Jin Sung Kim$^1$, Ying Shuai Quan$^1$, and Chung Choo Chung$^2$$^\dag$%
\thanks{ This work was supported by the National Research Foundation of Korea (NRF) Grant funded by the Ministry of Science and ICT (MSIT, Data-Driven Optimized Autonomous Driving Technology Using Open Set Classification Method) under Grant 2021R1A2C2009908. }
\thanks{$^1$J. S. Kim, Y. S. Quan are with the Dept. of Electrical Engineering, Hanyang University, Seoul 04763, Korea.
(e-mail: {\tt \{jskim06, ysquan\}@hanyang.ac.kr}}
\thanks{$^2$C. C. Chung is with the Div. of Electrical and Biomedical Engineering, Hanyang University, Seoul 04763, Korea. (+82-2-2220-1724, e-mail: {\tt cchung@hanyang.ac.kr})}
\thanks{\dag: Corresponding author 
}
}
\begin{document}

\maketitle
\thispagestyle{empty}
\pagestyle{empty}

\begin{abstract}
This paper proposes a method for uncertainty quantification of an autoencoder-based Koopman operator. The main challenge of using the Koopman operator is to design the basis functions for lifting the state. To this end, this paper builds an autoencoder to automatically search the optimal lifting basis functions with a given loss function. We approximate the Koopman operator in a finite-dimensional space with the autoencoder, while the approximated Koopman has an approximation uncertainty. To resolve the problem, we compute a robust positively invariant set for the approximated Koopman operator to consider the approximation error. Then, the decoder of the autoencoder is analyzed by robustness certification against approximation error using the Lipschitz constant in the reconstruction phase. The forced \emph{Van der Pol} model is used to show the validity of the proposed method. From the numerical simulation results, we confirmed that the trajectory of the true state stays in the uncertainty set centered by the reconstructed state.
\end{abstract}

\section{INTRODUCTION}
%
%
The data-driven approach in model identification has recently attracted widespread interest in many applications, such as fluid dynamical system~\cite{brunton2016discovering}, or vehicle dynamics~\cite{kim2023koopman}, because of its high nonlinearity.
The Koopman operator is one of the data-driven approaches to capture a nonlinear dynamical system in the form of a linear relationship in the infinite-dimensional space~\cite{koopman1931hamiltonian}. Since the Koopman operator has its linear property for representing the given system even though the underlying system is nonlinear, much research in recent years has focused on the Koopman-based modeling and control method~\cite{mauroy2020koopman}.
%

The Koopman operator-based model identification for the (controlled) nonlinear system has been studied. In~\cite{narasingam2020koopman, korda2018linear}, the Koopman was used to model the nonlinear dynamics with extended dynamic model decomposition (EDMD), and the Koopman-based Model Predictive Control (MPC) was studied.
%
%
%
%
In~\cite{narasingam2020koopman, korda2018linear},
%
the authors use some radial basis functions as lifting functions for expanding the original state space to a high-dimensional lifted space in obtaining the Koopman operator~\cite{mauroy2020koopman}.
In~\cite{takeishi2017learning, han2020deep, iacob2021deep}, however, it is reported that the selection strategy of appropriate lifting functions might only be systematically given to compute the Koopman operator with domain knowledge.
From the viewpoint of resolving the problem, recent work has been studied for constructing the optimal lifting function in~\cite{korda2020optimal}.
%

In recent years, machine learning and deep learning techniques have been actively used to construct lifting functions in the Koopman theory. In~\cite{takeishi2017learning, yeung2019learning, iacob2021deep, han2020deep}, the authors introduced the deep neural network as lifting functions of the Koopman operator.
In addition, \cite{otto2019linearly} used the autoencoder (AE) structure to lift and reconstruct the state of the system.
%
%
The main advantage of the learning-based Koopman model is to automatically search the lifting functions by training process without hands-on design.
In order to take advantage of the learning-based approach, we need to certify the robustness of the neural network to consider uncertainties when the neural network is used as the lifting function and reconstruction.
To take the issue into account, in~\cite{pan2020physics}, the authors studied the uncertainties of the AE-based Koopman operator with mean-field variational inference. Another scheme to quantify the uncertainties can be constructed by viewing the neural network as a nonlinear map~\cite{fazlyab2019efficient}.
%
%
In~\cite{fazlyab2019efficient, quan2022linear}, the authors proposed the estimation method for a Lipschitz constant of a neural network by regarding the activation functions of a hidden layer as quadratic constraints.
%
%
Therefore, the Koopman-basd model could be used to predict a state trajectory of a nonlinear system within an uncertain set by computing the Lipschitz constant of the neural network.

In this context,  this paper proposes the method for uncertainty quantification of an AE-based Koopman operator.
As mentioned above, one primary difficulty of using the Koopman operator is to design the basis lifting functions. To resolve the problem,
we construct an AE to automatically obtain the optimal lifting function and reconstruction map with respect to a given cost function.
%
%
In order to consider a practical implementation of the Koopman operator, the approximated Koopman operator is needed in finite-dimensional space, which results in a residual term.
Moreover, the residual term can be propagated through the decoder in the reconstruction phase.
Thus, this paper considers both the approximation and the reconstruction uncertainty. For this end, first, we compute the robust positively invariant set for the approximated Koopman model to consider the approximation uncertainty.
Moreover, we analyze the reconstruction error in the decoder with robustness certification using the Lipschitz constant, and the reconstruction uncertainty set is computed.
Finally, numerical simulation is conducted with the forced \emph{Van der Pol} to validate the effectiveness of the proposed method. From the simulation results,
%
%
it is confirmed that, given a dataset, the true state always stays within the uncertainty set centered by the reconstructed state.

\section{Koopman Operator Theory}

\subsection{Basic Concept of Koopman Operator}
Let us start with the Koopman operator approach for an discrete-time autonomous nonlinear dynamics
\begin{equation}
\boldsymbol{\eta}_{k+1} = f_a ( \boldsymbol{\eta}_k),
\label{eq:autonomous nonlinear model}
\end{equation}
where $\boldsymbol{\eta}_k \in \mathcal{N}$ is the state of the system, $f_a$ is a nonlinear function that evolves the state of the system forward in time, and $k \in \mathds{Z}_+$ is the discrete-time step.
Let us define a real-valued scalar function $\psi: \mathcal{N} \rightarrow \mathds{R}$, which is so-called \emph{observable}~\cite{mauroy2020koopman}. Each real-valued function $\psi$ is an element of an infinite-dimensional function space $\mathcal{F}_a$ (i.e., $\psi \in \mathcal{F}_a$)~\cite{mauroy2020koopman}. In the function space $\mathcal{F}_a$, the Koopman theory provides an alternative representation of~\eqref{eq:autonomous nonlinear model} by introducing the Koopman operator $\mathcal{K}_a: \mathcal{F}_a \rightarrow  \mathcal{F}_a$ defined by
\begin{equation}
\mathcal{K}_a  \psi (\boldsymbol{\eta}_k ) := \psi (f_a (\boldsymbol{\eta}_k) )
\end{equation}
for every $\psi \in \mathcal{F}_a$, where $\mathcal{F}_a$ is invariant under the Koopman operator~\cite{mauroy2020koopman, korda2018linear}.
The Koopman operator for autonomous nonlinear systems can be generalized to controlled nonlinear systems with a slight change~\cite{williams2016extending, korda2018linear}. This paper adopts the data-driven method in~\cite{korda2018linear}, which is practical and rigorous approach.
Consider a discrete-time nonlinear controlled system given as
\begin{equation}
\textbf{x}_{k+1} = f (\textbf{x}_{k}, \textbf{u}_k),
\label{eq:controlled nonlinear model}
\end{equation}
where $\textbf{x}_k \in \mathcal{X} \subseteq \mathds{R}^n$ is the state of the controlled system, $\textbf{u}_k \in \mathcal{U} \subseteq {\mathds R}^m$ is the input. One can define the extended state space $\mathcal{X} \times \ell (\mathcal{U})$, where $\ell (\mathcal{U})$ is the space of all control sequences, $\boldsymbol{\mu} := (\textbf{u}_k)_{k=0}^{\infty}$ with $\textbf{u}_k \in \mathcal{U}$. Then, in line with~\cite{korda2018linear}, the extended state is defined as
\begin{equation}
\chi = \begin{bmatrix} \textbf{x} \\ \boldsymbol{\mu} \end{bmatrix}.
\end{equation}
With this extended state, \eqref{eq:controlled nonlinear model} can be in the form of an autonomous system defined  by
\begin{equation}
\chi_{k+1} = F(\chi_k) :=
\begin{bmatrix}
f(\textbf{x}_k, \boldsymbol{\mu}_k (0) ) \\ \mathcal{S} \boldsymbol{\mu}_k
\end{bmatrix},
\label{eq:extended state model}
\end{equation}
where $\mathcal{S}$  is the left shift operator, i.e., $\mathcal{S} \boldsymbol{\mu}_k = \boldsymbol{\mu}_{k+1}$, and $\boldsymbol{\mu}_k (0) \in \mathds{R}^m$ is the first element of the control sequence at the time step $k$ (i.e., $\boldsymbol{\mu}_k(0) = \textbf{u}_k$)~\cite{korda2018linear}.
Then, we can define the Koopman operator $\mathcal{K}: \mathcal{F} \rightarrow  \mathcal{F}$ for~\eqref{eq:extended state model} as
\begin{equation}
\mathcal{K} \phi (\chi_k ) = \phi (F(\chi_k) ),
\label{eq:koopman definition}
\end{equation}
where $\phi: \mathcal{X} \times \ell(\mathcal{U}) \rightarrow \mathds{R}$ is a real-valued function, which belongs to the extended observables space $\mathcal{F}$~\cite{mauroy2020koopman}.
%
%
%
%
Interestingly, it can be seen that the Koopman operator is linear in the space $\mathcal{F}$ although the dynamical system is nonlinear~\cite{mauroy2020koopman}.
%

\subsection{Extended Dynamic Mode Decomposition for Approximation of Koopman Operator}
From the definition~\eqref{eq:koopman definition}, the Koopman operator $\mathcal{K}$ operates on $\mathcal{F}$, which needs an infinite number of basis functions. Thus, using the Koopman operator directly in the real world is not practical unless we can obtain the approximated finite-dimensional Koopman operator.
To this end, a finite-dimensional subspace $\bar{\mathcal{F}} \subset \mathcal{F}$ can be considered, which is spanned by a set of basis functions. In $\bar{\mathcal{F}}$, we can obtain a finite-dimensional Koopman operator $K \in \mathds{R}^{N \times N}$.
In general, however, $\bar{\mathcal{F}}$ is not invariant with regard to $\mathcal{K}$, which results in a residual term due to the approximation of the Koopman operator~\cite{mauroy2020koopman}.
The EDMD approach minimizes the residual term in the $l_2$ sense and is widely used to approximate the Koopman operator~\cite{williams2015data}.
The first step of the EDMD is to collect the data as
\begin{eqnarray}
\begin{array}{rcl}
\textbf{X} &=&
\begin{bmatrix}
\textbf{x}_1 & \textbf{x}_2 & \dots & \textbf{x}_{M}
\end{bmatrix} \in {\mathds R}^{n \times M},
\vspace{2mm} \\
\textbf{U} &=&
\begin{bmatrix}
\textbf{u}_1 & \textbf{u}_2 & \dots & \textbf{u}_{M}
\end{bmatrix} \in {\mathds R}^{m \times M},
\vspace{2mm} \\
\textbf{Y} &=&
\begin{bmatrix}
\textbf{y}_1 & \textbf{y}_2 & \dots & \textbf{y}_M
\end{bmatrix} \in {\mathds R}^{n \times M}.
%
%
\end{array}
\label{eq:data collection}
\end{eqnarray}
where $M$ is the number of data sample, and $\textbf{y}_k=\textbf{x}_{k+1}=f (\textbf{x}_k, \textbf{u}_k)$. Let us assume that we have basis functions $\phi_i$.
Then, the optimization problem of finding $K$ is
\begin{equation}
\min_{K}
\sum_{k=1}^M
\|
\boldsymbol{\phi} ( \chi_{k+1} ) -
K \boldsymbol{\phi} (\chi_{k})
\|_2^2.
\label{eq:minimization for koopman}
\end{equation}
where $\boldsymbol{\phi}=[ \phi_1 , \phi_2 ,  \cdots  ]^T$.
However, note that the extended state $\chi$ is of in general infinite-dimension, which leads to the fact that \eqref{eq:minimization for koopman} might not be computable. Hence, this paper designs a computable observable function as
\begin{equation}
\boldsymbol{\phi} ( \chi_k ) =
\begin{bmatrix}
\boldsymbol{\pi} ( \textbf{x}_k  ) \\
\boldsymbol{\mu}_k (0)
\end{bmatrix},
\end{equation}
where $\boldsymbol{\pi}(\textbf{x}_k)= \begin{bmatrix} \pi_1(\textbf{x}_k) & \cdots  \pi_N (\textbf{x}_k) \end{bmatrix}^T$ for some $N>0$ and $\pi_i:\mathcal{X} \rightarrow \mathds{R}$ can be designed as some radial basis functions or neural networks.
Since predicting the future input is not of interest~\cite{korda2018linear}, we can neglect the last $m$ rows of each $\boldsymbol{\phi} ( \chi_{k+1} ) - K \boldsymbol{\phi} (\chi_{k})$ in~\eqref{eq:minimization for koopman}.
Then, let us define the first $N$ rows of $K$ as $\bar{K} = \begin{bmatrix} A & B \end{bmatrix}$, where $A \in \mathds{R}^{N \times N}$, and $B \in \mathds{R}^{N \times m}$.
%
%
%
Then, \eqref{eq:minimization for koopman} can be modified to~\cite{williams2015data} such as
\begin{equation}
\min_{A, B}
\|
\tilde{\textbf{Y}}
-A \tilde{\textbf{X}} - B \textbf{U}
\|_F^2,
\label{eq:minimization for AB}
\end{equation}
where
$\| \cdot \|_F$ denotes the Frobenius norm of a matrix,
and
\begin{eqnarray*}
\begin{array}{rcl}
\tilde{\textbf{X}} &=&
\begin{bmatrix}
\boldsymbol{\pi} ( \textbf{x}_1 ) &
\boldsymbol{\pi} ( \textbf{x}_2 ) &
\dots &
\boldsymbol{\pi} ( \textbf{x}_{M})
\end{bmatrix}  \in {\mathds R}^{N \times M},
\vspace{2mm} \\
\tilde{\textbf{Y}} &=&
\begin{bmatrix}
\boldsymbol{\pi} ( \textbf{y}_1 ) &
\boldsymbol{\pi} ( \textbf{y}_2 ) &
\dots &
\boldsymbol{\pi} ( \textbf{y}_M )
\end{bmatrix} \in {\mathds R}^{N \times M}.
%
%
\end{array}
\end{eqnarray*}
Now, the obtained Koopman operator represents the linear dynamical system, which is approximated, in the lifted space using $A$, and $B$ such that
\begin{equation}
\boldsymbol{\pi} (\textbf{x}_{k+1})  \approxeq A \boldsymbol{\pi}(\textbf{x}_{k}) + B \textbf{u}_{k}.
\label{eq:LTI koopman model}
\end{equation}
\begin{remark}
There exist other frameworks to obtain a more accurate model than the LTI model~\eqref{eq:LTI koopman model}, e.g., linear parameter varying model or input-affine model. However, this paper focuses on the uncertainty quantification of the approximated Koopman operator; therefore, the reader can refer to~\cite{iacob2022koopman} to reduce the approximation error of~\eqref{eq:LTI koopman model}.
\end{remark}
The above Koopman-based controlled linear system evolves the state in the lifted space. Thus, it is needed to reconstruct the original state $\textbf{x}_k$ from the lifted state $\boldsymbol{\pi}(\textbf{x}_{k})$. To do this, we have the optimization problem to obtain a reconstruction matrix $C$ given by
\begin{equation}
\min_{C}
\|
\textbf{X}
-C \tilde{\textbf{X}}
\|_F^2.
\label{eq:minimization for C}
\end{equation}
Then, we obtain the reconstructed original state by
\begin{equation}
\textbf{x}_k \approxeq C  \boldsymbol{\pi}(\textbf{x}_{k}).
\end{equation}
The practical method to solve the optimization problems~\eqref{eq:minimization for AB} and~\eqref{eq:minimization for C} is referred to~\cite{korda2018linear}.

However, the main challenge of using the Koopman operator is that it is not easy to design the basis (lifting) functions that provide a Koopman invariant subspace~\cite{mauroy2020koopman}.
It is known that the selection strategy of appropriate lifting function is not clearly given without domain knowledge~\cite{takeishi2017learning}.
Random sampling-based radial basis function or polynomial basis function has been used to select the lifting function~\cite{korda2018linear}, but the design process of the optimal lifting function is still open~\cite{takeishi2017learning, han2020deep, iacob2021deep}. Thus, this paper investigates the artificial intelligence method to design the lifting function for the approximation of the Koopman operator in the following section.

\section{Autoencoder-based Koopman Operator}

\subsection{Autoencoder for Lifting Function and Reconstruction}

In this paper, an AE structure is used to design the lifting function and the reconstruction function. As shown in Fig.~\ref{fig:structure}, an \emph{overcomplete} AE has a structure in which the hidden state is greater than the size of the input and output state. Since the \emph{overcomplete} AE might learn concealed useful features from the input state~\cite{bengio2009learning}, this paper adopts the AE structure.

As illustrated in Fig.~\ref{fig:structure},
the encoder $\boldsymbol{\phi}_e(\textbf{x}_k): \mathbb{R}^{n} \rightarrow \mathbb{R}^{N}$ maps the original system state $\textbf{x}_k \in \mathbb{R}^{n}$ to the lifted state $\hat{\textbf{z}}_k \in \mathbb{R}^{N}$ with higher dimension where ${N} \gg {n} $, described by the following equations~\cite{quan2022linear}:
\begin{subequations}
\begin{align}
\boldsymbol{s}^0 &= \textbf{x}_k, \\
\boldsymbol{s}^{k+1} &=
\boldsymbol{\psi}^k_e ( W_e^k \boldsymbol{s}^k + b_e^k) \text{ for }  k=0, \cdots, l_e - 1,\\
\hat{\textbf{z}}_k  &=
\boldsymbol{\psi}^{l_e}_e ( W_e^{l_e} \boldsymbol{s}^{l_e} + b_e^{l_e}),
\end{align}
\label{eq:encoder}
\end{subequations}
where $W_e^k \in \mathbb{R}^{n^e_{k+1} \times n^e_k}$ and $b_e^k \in \mathbb{R}^{n^e_{k+1}}$ are the weight matrix and bias vector for the $k$-th layer.
The function $\boldsymbol{\psi}^k_e$ is the $k$-th activation function at each layer, i.e., in the form of
$\boldsymbol{\psi}_e^k(\boldsymbol{s}^k) =
\begin{bmatrix}
\psi^k_e(s^k_1) & \cdots & \psi^k_e(s^k_{n^e_k})
\end{bmatrix} ^ T$,
where $s^k_{i}$ for $i=1,\cdots,n^e_k$ is each neuron of $\boldsymbol{s}^k$. In this case, the encoder has $l_e$ layers.
Then, the decoder $\boldsymbol{\phi}_d(\hat{\textbf{z}}_k): \mathbb{R}^{N} \rightarrow \mathbb{R}^{n}$ is represented in the similar method given by
\begin{subequations}
\begin{align}
\boldsymbol{s}^0 &= \hat{\textbf{z}}_k, \\
\boldsymbol{s}^{k+1}
&= \boldsymbol{\psi}^k_d( W_d^k \boldsymbol{s}^k + b_d^k)  \text{ for }  k=0,\cdots, l_d - 1,\\
\hat{\textbf{x}}_k
&=  W_d^{l_d} \boldsymbol{s}^{l_d} + b_d^{l_d} , \label{eq:decoder last layer}
\end{align}
\label{eq:decoder}%
\end{subequations}
where $W_d^k \in \mathbb{R}^{n^d_{k+1} \times n^d_k}$ and $b_d^k \in \mathbb{R}^{n^d_{k+1}}$
are the weight matrix and bias vector for the $k$-th layer, respectively, and $\boldsymbol{\psi}^k_d$ is the $k$-th activation function at each layer in the form of
$\boldsymbol{\psi}_d^k(\boldsymbol{s}^k) =
\begin{bmatrix}
\psi^k_d(s^k_1) & \cdots & \psi^k_d(s^k_{n^d_k})
\end{bmatrix} ^ T$,
where $s^k_{i}$ for $i=1,\cdots,n^d_k$ is each neuron of $\boldsymbol{s}^k$. In this case, the decoder has $l_d$ layers.
The loss function for training the AE includes the mean squared error and regularization, specifically:
\begin{equation}
\min_{W_{(\cdot)}^k,~b_{(\cdot)}^k}
\frac{1}{D}
\sum_{i=1}^{D}
\| \textbf{x}_i - \hat{\textbf{x}}_i \|_2^2
+ \rho \Omega_L(\boldsymbol{W}),
\end{equation}
where $\Omega_L (\boldsymbol{W})$ and $\rho$ denote the $\mathcal{L}_2$ regularization on the weights and the corresponding coefficient, respectively. Since the overcomplete AE might learn the identity matrix (e.g., just copy the input information), the regularization term is adopted to remedy the problem~\cite{bengio2009learning}.
Now, we can have the lifted state with the lifting function as $\hat{\textbf{z}}_k = \boldsymbol{\phi}_e (\textbf{x}_k )$, and the reconstruction map as $\hat{\textbf{x}}_k = \boldsymbol{\phi}_d (\hat{\textbf{z}}_k )$. The next step is to find the approximated Koopman operator with the lifted state and analyze both approximation error and reconstruction error. We will discuss it in the following subsection.
\begin{remark}
Note that this paper designs the decoder without using nonlinear activation function and bias in the last layer, i.e., $\boldsymbol{\psi}^{l_d}_d$ is not used, and $b_d^{l_d}=0$ in~\eqref{eq:decoder last layer}. Thus, the reconstructed state $\hat{\textbf{x}}_k$ can be linearly obtained from the state $\boldsymbol{s}^{l_d}$.
\label{rmk:pure linear}
\end{remark}

\subsection{Analysis of Approximation and Reconstruction Error}
\begin{figure}[t]
\includegraphics[width=\columnwidth]{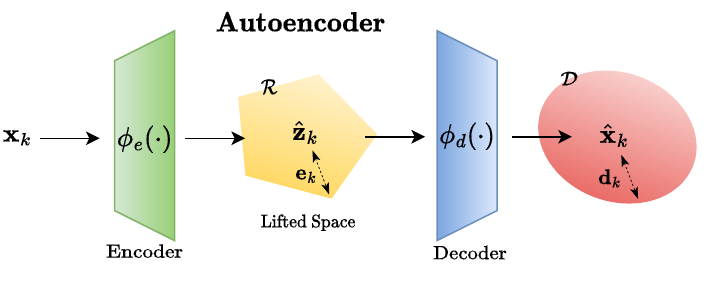}
\caption{Overall structure of the proposed method}
\label{fig:structure}
\end{figure}
Now, we have the AE to map the original state to lifted state. Thus, the Koopman operator can be obtained in the lifted space. Firstly, we collect the dataset as
\begin{equation}
\begin{split}
\textbf{X} &=
\begin{bmatrix}
\textbf{x}_1 & \textbf{x}_2 & \dots & \textbf{x}_{M}
\end{bmatrix} \in {\mathds R}^{n \times M},
\vspace{2mm} \\
\textbf{U} &=
\begin{bmatrix}
\textbf{u}_1 & \textbf{u}_2 & \dots & \textbf{u}_{M}
\end{bmatrix} \in {\mathds R}^{m \times M},
\vspace{2mm} \\
\textbf{Z}_{i}^{M} &=
\begin{bmatrix}
\boldsymbol{\phi}_e ( \textbf{x}_i ) &
\boldsymbol{\phi}_e ( \textbf{x}_2 ) &
\dots &
\boldsymbol{\phi}_e ( \textbf{x}_{M})
\end{bmatrix}  \in {\mathds R}^{N \times (M-i+1)}.
\vspace{2mm}
\end{split}
\label{eq:data collection for autoencoder}
\end{equation}
As using similar approach like~\eqref{eq:minimization for AB}, we have the optimization problem such as
\begin{equation}
\min_{\Phi, \Gamma}
\|
\textbf{Z}_{i+1}^{M}
-\Phi \textbf{Z}_{i}^{M-1} - \Gamma \textbf{U}
\|_F^2.
\label{eq:minimization for Phi Gamma}
\end{equation}
For the practical method to solve~\eqref{eq:minimization for Phi Gamma}, let us define
\begin{equation*}
\textbf{T} = \textbf{Z}_{i+1}^{M}
\begin{bmatrix} \
\textbf{Z}_{i}^{M-1} \\ \textbf{U}
\end{bmatrix}^T,~ \text{and}~~
\textbf{G} =
\begin{bmatrix}
\textbf{Z}_{i}^{M-1} \\ \textbf{U}
\end{bmatrix}
\begin{bmatrix}
\textbf{Z}_{i}^{M-1} \\ \textbf{U}
\end{bmatrix}^T.
\end{equation*}
Then, the equation can be obtained by
\begin{equation}
\textbf{T} = \mathcal{M} \textbf{G},
\label{eq:sol of normal form}
\end{equation}
where $\mathcal{M}= \begin{bmatrix} \Phi & \Gamma \end{bmatrix}$
is the approximated Koopman operator~\cite{korda2018linear}.
Then, we have
\begin{equation}
\hat{\textbf{z}}_{k+1} = \Phi \hat{\textbf{z}}_k + \Gamma \textbf{u}_k.
\label{eq:nonminal koopman model}
\end{equation}
However, the presence of modeling errors of~\eqref{eq:nonminal koopman model} is inevitable due to the approximation of the Koopman operator.
Thus, this paper considers the approximation error $\bf{w}_k$~\cite{zhang2022robust} such that
\begin{equation}
\textbf{z}_{k+1} = \Phi \textbf{z}_k + \Gamma \textbf{u}_k + \textbf{w}_k,
\label{eq:modeling fitting error}
\end{equation}
where $\textbf{z}_k \in \mathds{R}^N$ is the true lifted state following the Koopman definition~\eqref{eq:koopman definition}, and
$\textbf{w}_k \in \mathcal{W}$.
In this paper, we assume that $\mathcal{W}$ is locally bounded and includes the origin. To obtain the set $\mathcal{W}$ in the given dataset,
one can consider
\begin{equation}
\max \| \boldsymbol{\phi}_e(\textbf{x}_{k+1}) - \Phi \boldsymbol{\phi}_e (\textbf{x}_k) - \Gamma \textbf{u}_k \|_{\infty} = {\bf w}_{\text{max}},~k \in [0, M].
\end{equation}
Then, the set $\mathcal{W}$ can be obtained as $\mathcal{W}=\{ \textbf{w}_k~|~\|\textbf{w}_k\|_{\infty}  \leq \bf{w}_{\text{max}}\}$.
Let us define the state $\textbf{e}_k = \textbf{z}_k - \hat{\textbf{z}}_k$. Then, the error dynamics is given by
\begin{equation}
\textbf{e}_{k+1} = \Phi \textbf{e}_k + \textbf{w}_k.
\label{eq:error dynamics}
\end{equation}
Suppose that $\Phi$ is a strictly stable matrix~\cite{mauroy2020koopman}. Then, we can define the robust positively invariant (RPI) set as follows:
\begin{definition}[RPI set]
A set $\Omega \subset \mathds{R}^N$ is a robust positively invariant set of~\eqref{eq:error dynamics} if and only if $\Phi \Omega \oplus \mathcal{W} \subseteq \Omega$  for all $\textbf{e}_k \in \Omega$ and $\textbf{w}_k \in \mathcal{W}$, where $\oplus$ denotes the Minkowski sum\footnote{Notation: Given two sets $\mathcal{A}$ and $\mathcal{B}$, their Minkowski sum is defined by $\mathcal{A} \oplus \mathcal{B} = \{a+b | a \in \mathcal{A}, b \in \mathcal{B} \} $.}.
\end{definition}
To minimize the conservativeness, the minimal RPI (mRPI) set can be computed by
%
$\mathcal{R}_{\infty} = \bigoplus_{i=0}^{\infty} \Phi^i \mathcal{W}.
$
%
However, it is known that computing  $\mathcal{R}_{\infty}$ is impossible~\cite{rakovic2005invariant}; thus, this paper adopts the method from~\cite{rakovic2005invariant} to compute the outer approximation of the mRPI set.
\begin{theorem}[Computing RPI set]
\label{thm:RPI set}
Let the set $\mathcal{R}_s$ be
\begin{equation}
\vspace{-1mm}
\mathcal{R}_s \triangleq \bigoplus_{i=0}^{s-1} \Phi^i \mathcal{W}, ~~ \mathcal{R}_0 =\{ 0\}.
\vspace{-0.5mm}
\end{equation}
If $\mathcal{W}$ contains the origin, then there exists a finite integer $s \in \mathds{N}_+$ and a scalar $\alpha \in [0,1)$ satisfying $\Phi^s \mathcal{W} \subseteq \alpha \mathcal{W}$.
Furthermore, if $\Phi^s \mathcal{W} \subseteq \alpha \mathcal{W}$ holds, a set $\mathcal{R} (\alpha, s) \triangleq (1-\alpha)^{-1} \mathcal{R}_s$ is an outer approximation of $\mathcal{R}_{\infty}$, i.e., $\mathcal{R}_{\infty} \subseteq \mathcal{R} (\alpha, s)$.
\end{theorem}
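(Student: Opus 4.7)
The plan is to treat the two claims separately: first establish existence of a pair $(s,\alpha)$ with $\Phi^s\mathcal{W}\subseteq\alpha\mathcal{W}$, then deduce the outer-approximation bound from it. Both steps exploit that, in our setting, $\mathcal{W}=\{w:\|w\|_\infty\leq w_{\max}\}$ is a compact convex set containing the origin in its interior, together with the assumed strict stability of $\Phi$.

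For the existence part, strict stability gives $\rho(\Phi)<1$ and hence $\|\Phi^s\|\to 0$ in any submultiplicative matrix norm as $s\to\infty$. I would use that $\mathcal{W}$ is bounded by $w_{\max}$ and contains an $\ell_\infty$-ball of some radius $\epsilon>0$, writing $\Phi^s\mathcal{W}\subseteq\|\Phi^s\|_\infty\,w_{\max}\,\{w:\|w\|_\infty\leq 1\}$, and then pick $s$ large enough that $\alpha:=\|\Phi^s\|_\infty\,w_{\max}/\epsilon$ lies in $[0,1)$. Chaining the inclusions yields $\Phi^s\mathcal{W}\subseteq\alpha\,\epsilon\,\{w:\|w\|_\infty\leq 1\}\subseteq\alpha\mathcal{W}$, proving the first claim.

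For the outer-approximation part, I would first establish by induction on $k$ that $\Phi^{ks}\mathcal{W}\subseteq\alpha^k\mathcal{W}$: assuming the inclusion for $k$, $\Phi^{(k+1)s}\mathcal{W}=\Phi^{ks}(\Phi^s\mathcal{W})\subseteq\Phi^{ks}(\alpha\mathcal{W})=\alpha\,\Phi^{ks}\mathcal{W}\subseteq\alpha^{k+1}\mathcal{W}$. Next I would reindex the defining Minkowski sum by blocks of length $s$, using that a linear map distributes over $\oplus$,
\begin{equation*}
\mathcal{R}_\infty=\bigoplus_{i=0}^\infty\Phi^i\mathcal{W}=\bigoplus_{k=0}^\infty\bigoplus_{j=0}^{s-1}\Phi^{ks+j}\mathcal{W}=\bigoplus_{k=0}^\infty\Phi^{ks}\mathcal{R}_s,
\end{equation*}
apply the induction to obtain $\Phi^{ks}\mathcal{R}_s\subseteq\alpha^k\mathcal{R}_s$, and collapse the resulting scalar sum geometrically as $\bigoplus_{k=0}^\infty\alpha^k\mathcal{R}_s=(1-\alpha)^{-1}\mathcal{R}_s=\mathcal{R}(\alpha,s)$.

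The main obstacle I anticipate is the final geometric collapse: the identity $\alpha\mathcal{R}_s\oplus\beta\mathcal{R}_s=(\alpha+\beta)\mathcal{R}_s$ for nonnegative scalars requires convexity of $\mathcal{R}_s$, which must be inherited from convexity of $\mathcal{W}$; without convexity one only obtains the weaker inclusion $(\alpha+\beta)\mathcal{R}_s\subseteq\alpha\mathcal{R}_s\oplus\beta\mathcal{R}_s$, which points the wrong way. A secondary caveat is that the existence step really uses $0\in\mathrm{int}\,\mathcal{W}$ rather than merely $0\in\mathcal{W}$, but this is satisfied by construction whenever $w_{\max}>0$. Once these two points are verified, everything else is a mechanical manipulation of Minkowski sums.
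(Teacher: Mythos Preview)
Your argument is sound and is essentially the standard proof from Rakovi\'c et al.\ (2005), which is exactly what the paper does: its entire proof reads ``Please refer to~[Rakovi\'c et al., 2005] for details,'' so there is no independent in-paper argument to compare against. The two caveats you flag are both well taken---the cited reference indeed assumes $\mathcal{W}$ is a compact convex set with $0\in\mathrm{int}\,\mathcal{W}$ (a so-called C-set), not merely $0\in\mathcal{W}$ as the theorem here states, and convexity of $\mathcal{R}_s$ (inherited from $\mathcal{W}$) is precisely what licenses the geometric collapse $\bigoplus_{k\geq 0}\alpha^k\mathcal{R}_s=(1-\alpha)^{-1}\mathcal{R}_s$; both hypotheses are satisfied in this paper's setting because $\mathcal{W}$ is an $\ell_\infty$-ball of radius $w_{\max}>0$.
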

\begin{proof}
Please refer to~\cite{rakovic2005invariant} for details.
\end{proof}
From Theorem~\ref{thm:RPI set}, we obtain the outer approximation of the mRPI set $\mathcal{R}_{\infty}$. Thus, the approximation error of~\eqref{eq:nonminal koopman model} with respect to the Koopman operator is contained in $\mathcal{R} (\alpha, s)$.

We may not need the reconstruction matrix such as $C$ in~\eqref{eq:minimization for C}, but directly use the decoder to reconstruct the true state from the lifted state.
However, the model fitting error can be propagated through the decoder layer so that there exists the reconstruction error $\textbf{x}_k - \hat{\textbf{x}}_k = \textbf{d}_k \in \mathcal{D}$ such that
\begin{equation}
\begin{split}
\textbf{x}_k - \hat{\textbf{x}}_k
&= \boldsymbol{\phi}_d ( \textbf{z}_{k})
- \boldsymbol{\phi}_d (  \hat{\textbf{z}}_{k})\\
&= \boldsymbol{\phi}_d ( \hat{\textbf{z}}_{k} + \textbf{e}_{k})
- \boldsymbol{\phi}_d (  \hat{\textbf{z}}_{k})
\end{split}
\label{eq:reconstruction error}
\end{equation}
In~\eqref{eq:reconstruction error}, it might be considered that~\eqref{eq:nonminal koopman model} can predict the true trajectories $\textbf{x}_k$ within the reconstruction error. Therefore, we need to compute the set $\mathcal{D}$ to certify the robustness of the approximated Koopman-based model.
A neural network can be analyzed by robustness certification against input uncertainties. To do this, the Lipschitz constant of the input-output map (i.e., the trained neural network) can be calculated~\cite{fazlyab2019efficient}. The details will be provided in the following subsection.

\subsection{Robustness Certification of Decoder}

In this subsection, the robustness of the decoder layer of the AE is analyzed to compute the set $\mathcal{D}$. With the set, we can guarantee that trajectory of the reconstructed state can track the trajectory of the original state within uncertainties.
To quantify the robustness of the trained AE, Lipschitz continuity is adopted~\cite{fazlyab2019efficient, quan2022linear}. Let us consider that the decoder function $\boldsymbol{\phi}_d$ is locally Lipschitz continuous if there exists $L \geq 0$ such that:
\begin{equation}
\| \boldsymbol{\phi}_d(\sigma_1) - \boldsymbol{\phi}_d(\sigma_2) \|
\leq
L\| \sigma_1-\sigma_2 \|.
\label{eq:lips}
\end{equation}
The smallest $L$ which satisfies the condition (\ref{eq:lips}) is called the Lipschitz constant $L^*$.
The Lipschitz constant $L^*$ gives an upper bound of the variations of the output of $\boldsymbol{\phi}_d$ when the input changes from $\sigma_1$ to $\sigma_2$.
Then one can obtain the following:
\begin{equation}
\| \textbf{x}_k - \hat{\textbf{x}}_k \|
= \| \boldsymbol{\phi}_d ( \hat{\textbf{z}}_{k} + \textbf{e}_{k})
- \boldsymbol{\phi}_d (  \hat{\textbf{z}}_{k}) \|
\leq
L^* \| \textbf{e}_{k} \|.
\label{eq:lips for error state}
\end{equation}
Thus, if we have $L^*$, then the reconstruction error set $\mathcal{D}$ can be computed by using $ \textbf{e}_{k} \in \mathcal{R}(\alpha, s) $.
Now, let us consider that continuous nonlinear activation functions $\boldsymbol{\psi}^k_d$ of the decoder can be interpreted as component-wise slope-restricted nonlinearity with slope at least $\alpha$ and at most $\beta$,
\begin{equation}
\alpha \leq
\frac{\boldsymbol{\psi}^k_d(\nu_1) - \boldsymbol{\psi}^k_d(\nu_2)}{\nu_1-\nu_2}
\leq \beta.
\label{eq:slope}
\end{equation}
Then by using the slope-restricted property~\eqref{eq:slope} for each neuron in the $l_d$-layer of the decoder, an incremental quadratic constraint can be used for all stacked activation functions with a diagonal weighting matrix
%
%
%
%
$T \in \mathcal{T}_n : =
\{
T = \sum_{i=1}^{n} \lambda_{ii} e_i e^T_i,
~\lambda_{ii} \geq 0
\}
\label{eq:Tn}$
%
holding
\begin{equation*}
\begin{bmatrix}
\cdot
\end{bmatrix}^T
\begin{bmatrix}
-2 \alpha \beta T & (\alpha - \beta) T \\
(\alpha - \beta) T & -2T
\end{bmatrix}
\begin{bmatrix}
\nu_1 - \nu_2 \\ \boldsymbol{\psi}^k_d(\nu_1) - \boldsymbol{\psi}^k_d(\nu_2)
\end{bmatrix}
\geq 0,
\end{equation*}
where $e_i$ is an $i$-th unit vector~\cite{fazlyab2019efficient}. Then, the smallest Lipschitz constant $L^*$ is computed by the following theorem.
\begin{theorem}[Computing $L^*$]
\label{thm:L}
Suppose the decoder has a single hidden layer, i.e., $l_d = 1$.
Consider the constrained optimization problem is given such that
\begin{equation}
L^* = \min_{L^2, T} L^2 ~~\text{s.t.}~~P_l(L^2, T) \preceq  0,~~T \in \mathcal{T}_n,
\label{eq:SDP}
\end{equation}
where
\vspace{-2mm}
\begin{equation*}
P_l(L^2, T) =
\begin{bmatrix}
-2\alpha\beta W_d^{0^T} T W_d^{0} - L^2I & (\alpha+\beta) W_d^{0^T}T \\
(\alpha+\beta) T W_d^{0} & -2 T W_d^{1^T} W_d^1
\end{bmatrix}.
\end{equation*}
If there exists $L \geq 0$ by solving~\eqref{eq:SDP}, then we can obtain the smallest Lipschitz constant $L^*$.
\end{theorem}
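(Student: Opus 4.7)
The plan is to turn the squared Lipschitz inequality into a quadratic form in the input and activation increments, combine it with the slope-restricted incremental quadratic constraint via an S-procedure argument, and read off the LMI $P_l(L^2,T)\preceq 0$ from the resulting matrix inequality.

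First, I would exploit the single-hidden-layer hypothesis $l_d=1$ to collapse the decoder to $\boldsymbol{\phi}_d(\nu)=W_d^1\boldsymbol{\psi}_d^0(W_d^0\nu+b_d^0)+b_d^1$, cf.~\eqref{eq:decoder}. Setting $\eta=\nu_1-\nu_2$ and $\delta=\boldsymbol{\psi}_d^0(W_d^0\nu_1+b_d^0)-\boldsymbol{\psi}_d^0(W_d^0\nu_2+b_d^0)$, the biases cancel and $\boldsymbol{\phi}_d(\nu_1)-\boldsymbol{\phi}_d(\nu_2)=W_d^1\delta$. The squared form of~\eqref{eq:lips} then becomes the quadratic inequality $L^2\eta^T\eta-\delta^T W_d^{1T}W_d^1\delta\ge 0$, i.e., a quadratic form in the vector $(\eta,\delta)$.

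Second, I would invoke the componentwise slope-restricted property~\eqref{eq:slope} with the argument of $\boldsymbol{\psi}_d^0$ substituted by $W_d^0\nu_i+b_d^0$. Stacking across the neurons of the hidden layer with the diagonal multiplier $T\in\mathcal{T}_n$ and noting that the input increment of the activation is $W_d^0\eta$, one obtains
\begin{equation*}
\begin{bmatrix}W_d^0\eta\\ \delta\end{bmatrix}^T\!\begin{bmatrix}-2\alpha\beta T & (\alpha+\beta)T\\ (\alpha+\beta)T & -2T\end{bmatrix}\!\begin{bmatrix}W_d^0\eta\\ \delta\end{bmatrix}\ge 0,
\end{equation*}
which, after congruence with $\mathrm{diag}(W_d^0,I)$, is a quadratic form in $(\eta,\delta)$ whose (1,1) block is $-2\alpha\beta W_d^{0T}TW_d^0$ and off-diagonal block is $(\alpha+\beta)W_d^{0T}T$.

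Third, I would invoke the S-procedure: a sufficient condition for the Lipschitz quadratic form to be nonnegative on the set where the slope IQC is nonnegative is that an appropriate nonnegative combination of the two forms be globally positive semidefinite. Because $\mathcal{T}_n$ already encodes a free nonnegative scaling, the Lagrange multiplier can be absorbed into $T$, and taking the sum with weight one yields precisely the matrix $-P_l(L^2,T)$. Requiring $-P_l(L^2,T)\succeq 0$, equivalently $P_l(L^2,T)\preceq 0$, is therefore a certificate that $L$ is an upper bound on the Lipschitz constant, and since $P_l$ is affine in $(L^2,T)$, \eqref{eq:SDP} is a bona fide SDP whose optimum is the tightest such certificate.

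The main obstacle, as I see it, is the careful bookkeeping in the S-procedure step: one must align the sign convention for the slope IQC with the sign convention for the Lipschitz inequality so that the $-2T$ contribution from the IQC and the $W_d^{1T}W_d^1$ contribution from $\|W_d^1\delta\|^2$ combine correctly in the $(2,2)$ block of $P_l$. A secondary subtlety is that the SDP yields an upper bound on the true Lipschitz constant; identifying its optimum with $L^*$ as in the theorem implicitly relies on the diagonal parameterization $\mathcal{T}_n$ being rich enough to make the relaxation tight for the activations used here, which is the standard justification in the LipSDP framework cited by the authors.
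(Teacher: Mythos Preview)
Your argument is correct and is exactly the LipSDP derivation of Fazlyab et al.\ that the paper defers to; the paper itself does not give a self-contained proof but simply cites \cite{fazlyab2019efficient, quan2022linear}. Your caveat that the SDP optimum is, strictly speaking, the tightest \emph{certifiable} upper bound rather than the exact Lipschitz constant is well taken and in fact sharper than the theorem statement; your bookkeeping of the $(2,2)$ block as the sum of the $-2T$ term from the IQC and the $W_d^{1T}W_d^1$ term from $\|W_d^1\delta\|^2$ is also the correct reading (the product $-2T\,W_d^{1T}W_d^1$ printed in $P_l$ appears to be a typographical slip for $-2T+W_d^{1T}W_d^1$).
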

\begin{proof}
Please refer to~\cite{fazlyab2019efficient, quan2022linear} for details.
\end{proof}
\begin{corollary}[Computing $\mathcal{D}$]
From Theorem~\ref{thm:L}, we can obtain the smallest Lipschitz constant $L^*$ holding~\eqref{eq:lips}. Then it is immediate to find the outer approximation of the set $\mathcal{D}$ as $ L^* \mathcal{R}(\alpha, s) \supseteq \mathcal{D} $ by using~\eqref{eq:lips for error state}.
\end{corollary}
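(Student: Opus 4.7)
The plan is to chain together Theorem~\ref{thm:L} and Theorem~\ref{thm:RPI set}: the Lipschitz bound from the former converts separations in the lifted space into separations in the reconstruction space, while the RPI set from the latter bounds the lifted-space separations themselves. The statement is flagged as ``immediate'', so my proof should read as a bookkeeping step on top of the already derived equation~\eqref{eq:lips for error state} rather than as a new argument.

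First I would invoke Theorem~\ref{thm:L} to extract a finite $L^* \geq 0$ for which the Lipschitz bound~\eqref{eq:lips} applies to the decoder $\boldsymbol{\phi}_d$. Instantiating this bound at $\sigma_1 = \hat{\textbf{z}}_k + \textbf{e}_k$ and $\sigma_2 = \hat{\textbf{z}}_k$ reproduces the chain~\eqref{eq:reconstruction error}--\eqref{eq:lips for error state}, giving the pointwise estimate $\|\textbf{d}_k\| \leq L^* \|\textbf{e}_k\|$ for every realization $\textbf{d}_k \in \mathcal{D}$.

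Next I would invoke Theorem~\ref{thm:RPI set}: because $\mathcal{W}$ is bounded and contains the origin, the error dynamics~\eqref{eq:error dynamics} keeps every trajectory in $\mathcal{R}_\infty \subseteq \mathcal{R}(\alpha, s)$. Hence each $\textbf{d}_k \in \mathcal{D}$ is the image under a norm-$L^*$-bounded displacement of some $\textbf{e}_k \in \mathcal{R}(\alpha, s)$, and scaling commutes with the Minkowski-sum definition of $\mathcal{R}(\alpha, s)$, yielding the desired inclusion $\mathcal{D} \subseteq L^* \mathcal{R}(\alpha, s)$.

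The one point I would want to verify carefully, and which I expect to be the only real obstacle, is the set-valued step from the pointwise inequality $\|\textbf{d}_k\| \leq L^* \|\textbf{e}_k\|$ to the outer-approximation inclusion $\mathcal{D} \subseteq L^* \mathcal{R}(\alpha, s)$. This requires interpreting $L^*\mathcal{R}(\alpha,s)$ as the uniform scaling of a convex, origin-symmetric body, so that a norm-based radius comparison translates into a genuine set inclusion. Since $\mathcal{R}(\alpha, s) = (1-\alpha)^{-1} \bigoplus_{i=0}^{s-1}\Phi^i \mathcal{W}$ is a finite Minkowski sum of symmetric copies of the infinity-norm ball $\mathcal{W}$, convexity and origin symmetry are automatic, and the inclusion follows. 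Once this interpretation is settled, no further work is needed.
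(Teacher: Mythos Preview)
Your approach is exactly the one the paper intends: the paper offers no separate proof for this corollary and simply declares the inclusion ``immediate'' from~\eqref{eq:lips for error state}, so your two-step chain (Lipschitz bound from Theorem~\ref{thm:L}, then RPI containment from Theorem~\ref{thm:RPI set}) is precisely the reasoning being invoked.

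That said, the justification you give for the ``only real obstacle'' does not hold. Convexity and origin-symmetry of $\mathcal{R}(\alpha,s)$ are \emph{not} enough to pass from the pointwise bound $\|\textbf{d}_k\| \leq L^*\|\textbf{e}_k\|$ with $\textbf{e}_k \in \mathcal{R}(\alpha,s)$ to the membership $\textbf{d}_k \in L^*\mathcal{R}(\alpha,s)$. For a counterexample, take $\mathcal{R}(\alpha,s)$ to be a thin elongated symmetric convex body, pick $\textbf{e}_k$ along its long axis, and let $\textbf{d}_k$ point along the short axis with $\|\textbf{d}_k\| = L^*\|\textbf{e}_k\|$; then $\textbf{d}_k \notin L^*\mathcal{R}(\alpha,s)$. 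More fundamentally, $\mathcal{R}(\alpha,s) \subset \mathds{R}^N$ lives in the lifted space while $\mathcal{D} \subset \mathds{R}^n$ lives in the original state space, so the literal inclusion $\mathcal{D} \subseteq L^*\mathcal{R}(\alpha,s)$ does not even type-check when $n \neq N$. The paper is being informal here; what the Lipschitz estimate~\eqref{eq:lips for error state} rigorously delivers is $\|\textbf{d}_k\| \leq L^* \sup_{\textbf{e} \in \mathcal{R}(\alpha,s)} \|\textbf{e}\|$, i.e.\ $\mathcal{D}$ is contained in the $n$-dimensional norm ball of radius $L^*\cdot \mathrm{rad}\,\mathcal{R}(\alpha,s)$. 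That is the outer approximation actually supported by the argument (and the one effectively plotted in the phase-plane figure), so you should state the conclusion in those terms rather than as a scaled copy of $\mathcal{R}(\alpha,s)$.
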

%

\section{Simulation Results}

We conduct a comparative study to validate the accuracy of the AE-based Koopman operator. We adopt the EDMD as a baseline method from~\cite{korda2018linear}, which is widely studied in the Koopman operator-based modeling. Here, for a numerical example,  we consider a forced \emph{Van der Pol} dynamics which is a stable periodic orbit so that the RPI is well defined.

\subsection{Van Der Pol and Data Collection}

For a numerical simulation, this paper considers forced \emph{Van der Pol} model as follows:

\begin{eqnarray}
\begin{array}{rcl}
\dot{x}_1 &=& 2 x_2 , \\
\dot{x}_2 &=& -0.8 x_1 - 10 x_1^2 x_2 + 2 x_2 - u.
\end{array}
\label{eq:van_der_pol}
\end{eqnarray}
%
%
In order to obtain the training dataset for the AE, we discretize the given dynamics~\eqref{eq:van_der_pol} with the fourth-order Runge-Kutta method. The sample rate is set to $T_s = 0.01s$. We have 1000 random samples for the system's initial state and simulate the 200 sample time for each initial state, i.e., 2s simulation for 1000 initial states.
the initial state is randomly obtained over the interval $\begin{bmatrix} -1 & 1 \end{bmatrix}$. The control input $u$ is also randomly selected with uniform distribution over the range $\begin{bmatrix} -1 & 1 \end{bmatrix}$.
The dataset collection setting mentioned above provides the dataset matrix $\textbf{Z}_{i}^{M-1}$, $\textbf{Z}_{i+1}^{M}$, and $\textbf{U}$.
%
%
%
The lifting functions $\boldsymbol{\pi}(\cdot)$ are used as 100 thin plate spline radial basis functions concatenated with the system state for the benchmark. The thin plate basis function is given as $\pi(\textbf{x}_k)=\| \textbf{x}_k -\textbf{x}_0 \| \cdot \text{log} (\| \textbf{x}_k -\textbf{x}_0 \|)$. The center point $\textbf{x}_0$ is randomly selected in the unit box with uniform distribution. Thus, the lifting function has a size of $N=102$.


\subsection{Autoencoder}
\begin{figure}[t]
\centering
\subfigure[Prediction of the system's first state]{
\includegraphics[width=0.8\columnwidth] {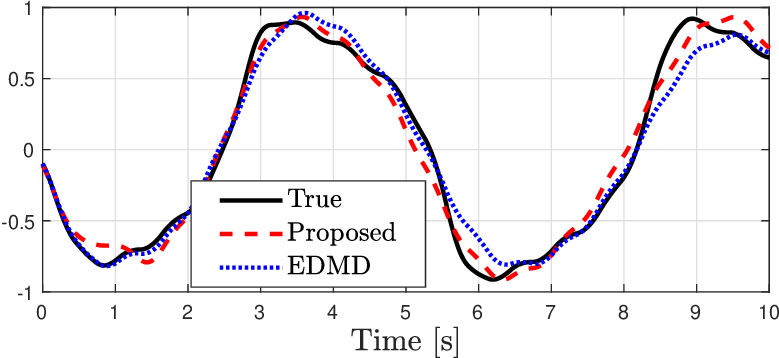}
\label{fig:fitting_x1}
}
\subfigure[Prediction of the system's second state]
{\includegraphics[width=0.8\columnwidth] {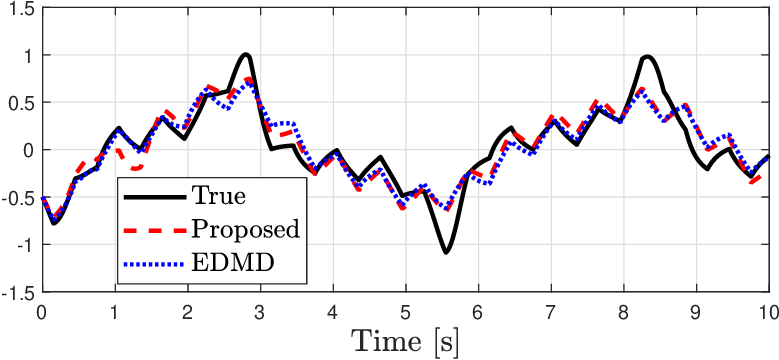}
\label{fig:fitting_x2}
}
%
\caption{Multi-step prediction performance with test set. The initial state $\textbf{x}_0$ and the input are available.}
\label{fig:fitting}
\end{figure}
%
%
%
This paper adopts a single-hidden-layer AE, i.e., the encoder has one hidden layer, and the decoder has one hidden layer ($l_e=1$, $l_d=1$). The activation function for the hidden layer of the AE is designed as the hyperbolic tangent sigmoid transfer function, i.e., $\boldsymbol{\psi}^k_e(s)= \boldsymbol{\psi}^k_d(s)= 2/(1+e^{-2s})-1$. On the other hand, the pure linear function is used for the activation function of the last layer of the decoder, as mentioned in Remark~\ref{rmk:pure linear}.
Moreover, we conduct normalization in the input layer to match the input range and extract the feature from the input data.
The training method is scaled conjugate gradient backpropagation, which performs well over a wide variety of problems~\cite{beale2010neural}. The dataset is divided three-fold, i.e., 70\% for training, 20\% for validation, and 10\% for test data. The AE is one of the unsupervised learning, so the training data for input and target of the AE is used $\textbf{X}$.
%
%

%
We trained three AEs with various sizes of neurons in the hidden layer of the AE to compare the fitting performance of trained models and the baseline method, EDMD.
We select the size of the hidden layer as $\{20, 60, 100\}$. We compare the performance in the sense of Mean Squared Error (MSE) $= (\sum_{i=0}^n \| \textbf{x}_i - \hat{\textbf{x}}_i  \|_2^2)/n$
%
%
and the maximum error (ME) for the absolute value of each system's state. Table.~\ref{table:autoencoder comparison} reports the results of the EDMD and each AE. As shown in Table.~\ref{table:autoencoder comparison},
the larger size of the hidden layer does not always bring satisfactory performance in both the MSE and the ME, and results in a large value of the Lipschitz constant. Instead, an appropriate size of the hidden layer outperforms others in our study, (i.e., $n_1^e=n_1^d=60$).

We certified the robustness of the decoder of the AE. The linear matrix inequalities (LMI) toolbox from MATLAB was used to solve~\eqref{eq:SDP}. From the trained AE, we used the weights and biases of the decoder, then obtain the smallest Lipschitz constant, $L^*$, as shown in Table.~\ref{table:autoencoder comparison}. It can be seen that the size of the decoder affects the Lipschitz constant because a large number of neurons in the hidden layer might contribute to the large uncertainties.
%

\subsection{Results}
\begin{table}[b]
\small
\caption{\small Fitting performance comparison with validation set. The method in~\cite{korda2018linear} is used for EDMD. $L^*$ is obtained by~\eqref{eq:SDP}.}
\begin{tabular}{c|cc|cc|c}
\hline
\multicolumn{1}{l|}{} & \multicolumn{2}{c|}{$x_1$}        & \multicolumn{2}{c|}{$x_2$}        & \multirow{2}{*}{$L^*$} \\ \cline{2-5}
& ME       & MSE             & ME       & MSE             &                                         \\ \hline
EDMD        & 0.350          & 0.013          & 0.497          & 0.028          & -                                       \\
$n^{e}_1=n^{d}_1=20$            & 0.753          & 0.108          & 0.698          & 0.081          & \textbf{1.6889}                                  \\
$n^{e}_1=n^{d}_1=60$            & \textbf{0.203} & \textbf{0.009} & \textbf{0.460} & 0.026          & 2.1974                        \\
$n^{e}_1=n^{d}_1=100$           & 0.255          & 0.010          & 0.465          & \textbf{0.025} & 2.7883                                  \\
\hline
\end{tabular}
\label{table:autoencoder comparison}
\end{table}
%
%
%

\begin{figure}[t]
\includegraphics[width=1\columnwidth]{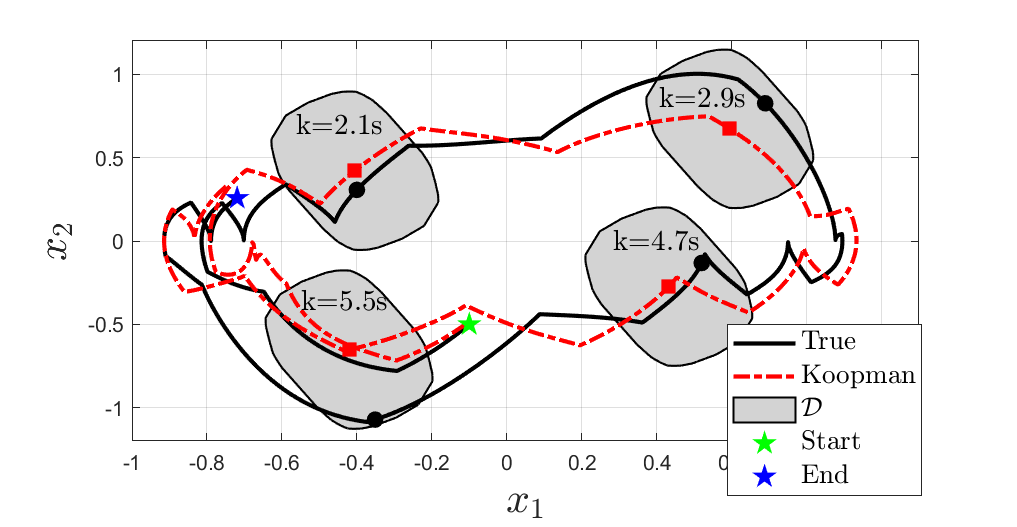}
\caption{\small State trajectory for $k \in [0, 7]$ in phase plane: The initial state (i.e., green pentagram) is $\textbf{x}_0^T=[-0.1~-0.5]^T$. Gray area shows set $\mathcal{D}$ at each k=2.1s, 2.9s, 4.7s, and 5.5s.
%
%
}
\label{fig:set D}
\end{figure}

In order to evaluate the performance of the proposed method, we exert a specific shape of input into the model~\eqref{eq:van_der_pol}. The system input $u$ is designed as square shaped wave signal whose magnitude is one and frequency is 0.3s~\cite{korda2018linear}. We set the initial state as $\textbf{x}_0^T = \begin{bmatrix}-0.1 &-0.5\end{bmatrix}^T$ to validate the accuracy of the AE-based Koopman model.

The result of each system state prediction is obtained with only the initial state $\textbf{x}_0$ and input, as shown in Fig.~\ref{fig:fitting}.
In Fig.~\ref{fig:fitting}, the solid black line is the true trajectory of the system generated by the input $u$. The red dashed line stands for the result from the AE. The AE is adopted with $n_1^e=n_1^d=60$ because overall fitness is better than other AEs'.
The blue dotted line represents the EDMD result. It can be seen that the AE has better fitting performance compared to the EDMD.
%
%
Moreover, we compared quantitatively in Table.~\ref{table:autoencoder comparison} that the AE with $n_1^e=n_1^d=60$ had less ME and MSE than the EDMD has.
%
%

We calculated the set $\mathcal{D}$ and evaluated its validity. As is illustrated in Fig.~\ref{fig:set D}, the solid black line stands for the trajectory of the true state, the dotted red line is generated by the proposed method, and the gray set represents the set $\mathcal{D}$. We sample four points of each trajectory as circle markers for the true state and square markers for the proposed method. With sample points, it can be shown that the true state stays in the set $\mathcal{D}$ centered by the reconstructed state. Especially, the true state is included in the set $\mathcal{D}$ even though the prediction error is large at time $k=5.5s$.
%

\section{Conclusion and Future Work}

In this paper, we proposed the method for uncertainty quantification of the autoencoder-based Koopman operator. This paper used the AE to design the lifting basis functions. It also considered the approximation error resulting from the finite-dimensional Koopman approximation. To this end, we computed the RPI set for the approximated Koopman model against approximation uncertainty. Moreover, we considered the reconstruction error propagated by the approximation error through the decoder. In order to acquire the reconstruction uncertainty set, the decoder was analyzed by robustness certification against the approximation error using the Lipschitz constant. In the simulation study, we used the forced \emph{Van der Pol} model to validate the effectiveness of the proposed method. The simulation results confirmed that the true state stayed in the uncertainty set centered by the reconstructed state. In practice, there exist state uncertainties due to process or measurement noise. Thus, the propagation of state uncertainties in the encoder will be considered in future work.


%
\bibliographystyle{IEEEtran}        
\bibliography{IEEEabrv,BIB_koopman_uncertainty}   

\end{document}